\documentclass[11pt,letterpaper]{article}
\usepackage{style}
\begin{document}

\title{Constant-round quantum advantage in communication complexity for total functions}
\author{
Atsuya Hasegawa\\
Graduate School of Mathematics\\
Nagoya University \\
\texttt{atsuya.hasegawa@math.nagoya-u.ac.jp}
\and Fran{\c c}ois Le Gall\\
Graduate School of Mathematics\\
Nagoya University\\
\texttt{legall@math.nagoya-u.ac.jp}
}
\date{}
\maketitle
\begin{abstract}
We show that there exists a total function for which there is a polynomial gap between the randomized and the constant-round quantum communication complexity. Previously, such a separation was known only for quantum protocols using polynomially many rounds.

\end{abstract}

\section{Introduction}

\subsection{Background}

Communication complexity~\cite{YaoSTOC79,YaoFOCS93} provides a basic setting for studying the power of quantum communication.  Two parties, Alice and Bob, receive inputs $x,y \in \{0,1\}^n$ and want to compute a function $f(x,y)$ using as little communication as possible. A central question is to quantify the advantage of quantum communication over classical communication. The case of total functions is particularly fundamental, as it provides a more natural setting in which the protocol must work on every pair of inputs, without any promise on the inputs.

The first asymptotic quantum advantage for a total function was shown in the seminal work by Buhrman, Cleve, and Wigderson~\cite{Buhrman+STOC1998} for set disjointness.  For $\DISJ_n(x,y)=1\Leftrightarrow x_i y_i=0 \text{ for every } i\in[n]$, their protocol runs Grover search~\cite{GroverSTOC1996} on the virtual input $z=x\wedge y$, whose $i$-th bit $z_i=x_i\wedge y_i$ is jointly determined by Alice and Bob.  To implement a quantum query $|i,b\rangle\mapsto |i,b\oplus z_i\rangle$ on a superposition of indices, Alice coherently computes $x_i$ into an ancilla and sends the index, answer, and ancilla registers to Bob.  Bob uses $y_i$ to flip the answer qubit precisely when $x_i=y_i=1$, sends the registers back to Alice, and Alice uncomputes the ancilla.  Thus, each Grover query is implemented through an Alice-to-Bob-to-Alice exchange of $O(\log n)$ qubits.  Since Grover search makes $O(\sqrt n)$ queries, this gives a quantum protocol using $O(\sqrt n\log n)$ communication.

Classical bounded-error protocols require linear communication, $\RC{\DISJ_n}=\Theta(n)$, as shown in~\cite{KalyanasundaramSchintger1992,Razborov1992}. The quantum upper bound was subsequently improved by H{\o}yer and de Wolf~\cite{Hoyer+STACS02}, and Aaronson and Ambainis~\cite{AaronsonAmbainisToC2005} obtained the optimal bound $\QC{\DISJ_n}=O(\sqrt n)$, matching Razborov's lower bound~\cite{Razborov2003}.  Thus, disjointness exhibits a quadratic separation between randomized and quantum communication complexity.

The cheat-sheet framework of Aaronson, Ben-David, and Kothari~\cite{Aaronson+STOC16} made it possible to go beyond the quadratic Grover-type separation in query complexity.  Anshu et al.~\cite{Anshu+FOCS2016} developed a communication analogue, and obtained the first super-quadratic separation for a total communication problem.  Subsequent work of Bansal and Sinha~\cite{BansalSinhaSTOC2021} and Sherstov, Storozhenko, and Wu~\cite{Sherstov+SICOMP23} strengthened the underlying randomized-versus-quantum query separations.  These improved query separations can be transferred to communication complexity by combining randomized lifting theorems such as~\cite{Chattopadhyay+ICAP19} for the randomized lower bounds with the query-to-communication simulation of Buhrman, Cleve, and Wigderson~\cite{Buhrman+STOC1998} for the quantum upper bounds.

The resulting quantum protocols are, however, highly interactive. In the standard measure of two-way quantum communication complexity, only the total amount of communication is counted, while the number of rounds is unrestricted.  As illustrated above for disjointness, the BCW simulation implements each quantum query through an Alice-to-Bob-to-Alice exchange and hence uses two rounds per query. The later disjointness protocols and the cheat-sheet-based upper bounds follow the same distributed-query paradigm.  Consequently, their round complexity scales with the sequential query complexity of the underlying quantum algorithms, which is polynomial in the input length in all previously known constructions.

This motivates the following question:

\begin{center}
\emph{Can a total function exhibit a polynomial quantum advantage in communication complexity \\ using only a constant number of rounds of quantum communication?}
\end{center}

\subsection{Our contribution}

We answer the question above affirmatively. To the best of our knowledge, this gives the first polynomial quantum advantage for a total function using only a constant number of messages. We denote by $\QCr{f}{r}$ the $r$-round quantum communication complexity for $f$.

\begin{theorem}[Informal version of \cref{th:main-lb,th:main-ub}]\label{th:main-informal}
For every fixed integer $t\ge 1$, there exists a family of total
Boolean functions $f:\{0,1\}^N\times\{0,1\}^N\to\{0,1\}$
such that
\[
    \RC{f}
    =
    \widetilde{\Omega}\left(
        \left(\QCr{f}{2t+2}\right)^{
            \frac32-\frac{1}{4t}
        }
    \right).
\]
\end{theorem}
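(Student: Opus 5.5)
The plan is to build $f$ as a cheat-sheet totalization, in the style of Aaronson, Ben-David and Kothari and of Anshu et al., of a lifted partial function that has a very low-round quantum query algorithm. The hard partial problem will be $t$-fold Forrelation on $n$ bits, where $t$ is the fixed parameter of the statement. Bansal and Sinha give a $t$-query quantum algorithm for it, and a randomized query lower bound of $\widetilde\Omega(n^{1-1/(2t)})$. We compose it with a small gadget $G$ (inner product on $O(\log n)$ bits), so the randomized query lower bound transfers to communication by the lifting theorem of Chattopadhyay et al. We then take $c=\Theta(\log n)$ independent copies. Their answers form a $c$-bit address into an array of $2^c$ cells, and each cell may hold a claimed certificate for the copies. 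The function $f$ outputs $1$ iff the addressed cell contains certificates that are consistent with the actual inputs, together with some distinguished output condition (so that $f$ is total).

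\textbf{Lower bound.} Here I would follow the cheat-sheet argument. A randomized protocol for $f$ must, on the relevant hard distribution, essentially solve one of the lifted Forrelation copies, since otherwise it cannot locate the cell. This gives
\[
\RC{f}=\widetilde\Omega\bigl(n^{1-1/(2t)}\bigr)
\]
in terms of the lifted gadget size. The cheat-sheet lemma loses only polylogarithmic factors here, provided the certificate cells are made large enough that the address reveals nothing. I expect this part to be mostly bookkeeping on top of lifting plus the Bansal--Sinha bound.

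\textbf{Upper bound.} The point is to avoid sequential query simulation. Each of the $t$ queries of the Forrelation algorithm, run in parallel over all $c$ copies, is implemented by one BCW Alice$\to$Bob$\to$Alice exchange of $\widetilde O(1)$ qubits. This uses $2t$ rounds, and the measurement then yields the address with high probability; amplify by repetition inside the same rounds. The remaining $2$ rounds must verify the addressed certificate. A Forrelation certificate has size about $n$ (sizes are chosen so this is not free), and checking it against the gadget inputs is a distributed search for an inconsistency. I would implement that check with a constant-round protocol whose cost is roughly $\sqrt{\text{certificate size}}$ up to the parameter trade-off. Concretely, Alice would hash or partially reveal the certificate, and Bob would test one superposed block, using a one-shot distributed Grover-type check on blocks. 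Then choose the certificate size $s$ and the input size $n$ to balance $\QCr{f}{2t+2}=\widetilde O(s^{\alpha})$ against $\RC{f}=\widetilde\Omega(n^{1-1/(2t)})$. The target exponent $\tfrac32-\tfrac1{4t}=\tfrac32\bigl(1-\tfrac1{2t}\bigr)+\tfrac1{2t}\cdot\tfrac12$ suggests that the verification cost should scale like $n^{2/3}$-type quantities after optimization, and this bookkeeping fixes the parameters.

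\textbf{Main obstacle.} The hardest step is the verification in only two rounds. Grover-style checking of a certificate of size $s$ normally needs $\sqrt s$ rounds. I would first try splitting the certificate into blocks, so that a single round-trip tests a random block in superposition with $\sqrt{\cdot}$-sized registers. I would then make the cheat-sheet condition robust, with certificates encoded by an error-correcting code, so that any inconsistent certificate is wrong on a constant fraction of blocks and a single random test detects it with constant probability. Making the lower-bound argument survive this robustness encoding is where I expect the real difficulty to lie.
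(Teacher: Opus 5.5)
There is a genuine gap, and it is the step you flag yourself: the two-message verification. You apply the cheat sheet to lifted Forrelation directly. A certificate must then pin down both the promise and the value of $\Ff_n$, which for Forrelation-type functions means essentially all $n$ input bits. Verification therefore has to detect a single discrepancy among $\Theta(n)$ lifted positions. That is an $\mathrm{OR}$ of $\Theta(n)$ lifted bits, and it embeds $\DISJ$ on $\Theta(n)$ coordinates.

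A one-shot superposed test over $k$ blocks finds a lone bad block with probability only $O(1/k)$. With $p$ parallel queries per round you still need $\Omega(\sqrt{k/p})$ rounds. The bounded-round disjointness bounds of Jain--Radhakrishnan--Sen and Braverman et al.\ confirm that any constant-round quantum check costs $\widetilde\Omega(n)$. That already exceeds your randomized lower bound $\widetilde\Omega(n^{1-1/(2t)})$, so the separation disappears.

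The error-correcting-code idea does not rescue this. The discrepancy is between the cell and the adversarially chosen instance part, and encoding the \emph{certificate} does not spread a one-position disagreement with the \emph{input}. You would have to encode the input itself, which changes the function and its lower bound; you leave that open. The parameters also do not close. With $\RC{f}=\widetilde\Omega(n^{1-1/(2t)})$, the target exponent would require total quantum cost $\widetilde O(n^{2(2t-1)/(6t-1)})$, for example $n^{2/5}$ when $t=1$. Even your optimistic $\sqrt n$ verification gives no separation at all for $t=1$.

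The missing idea, which is the paper's route, is to insert $\ANDOR$ on $n^2$ bits before building the cheat sheet. That is, build the Aaronson--Ben-David--Kothari cheat sheet on $\Ff_n\circ\ANDOR$, and only then lift with $\IP_m$. This does three things at once:
\begin{enumerate}
\item The randomized bound grows to $\widetilde\Omega(n^2\cdot n^{1-1/(2t)})$.
\item Each of the $t$ queries to $z_j=\ANDOR\bigl((\IP_m(x_{j,r},y_{j,r}))_{r}\bigr)$ is simulated exactly in one Alice-to-Bob-to-Alice exchange. Alice coherently sends her entire block $x_j$ ($mn^2$ qubits), and Bob evaluates the whole $\ANDOR$ into the target qubit. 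All $O(\log^2 n)$ amplified copies share the same $2t$ messages, for $\widetilde O(n^2)$ qubits in total.
\item The certificate shrinks to $n$ $\ANDOR$-certificates of $n$ bits each per instance, so verification is purely classical and needs no search. Alice sends the address and her blocks for the addressed cell. Bob reconstructs the cell and returns it together with his blocks at the $\widetilde O(n^2)$ referenced positions. Alice then checks the certificates.
\end{enumerate}
The point is that $\ANDOR$ makes the certificate size equal to the per-query simulation cost ($n^2$), both a factor $n^{1-1/(2t)}$ below the randomized cost. This is exactly where $\frac32-\frac1{4t}=\frac12\bigl(3-\frac1{2t}\bigr)$ comes from.
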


Our result is a power $3/2-1/(4t)$ separation.  In particular, setting $t=1$ gives a 4-round protocol and a power $5/4$ separation. 
More generally, for every constant $\varepsilon>0$, choosing a sufficiently large but fixed $t$ gives a constant-round separation of power $3/2-\varepsilon$. Moreover, our $2t+2$-round communication protocol consists of $2t$-round quantum communication and subsequent $2$-round classical communication. In particular, our 4-round communication involves only $2$-round quantum communication.

As an immediate consequence of \cref{th:main-informal}, we also obtain an exponential round separation in bandwidth-limited distributed computing.  On a two-node network, a $T$-round algorithm for the CONGEST model with bandwidth $B$ can be viewed as a two-party communication protocol with $T$-rounds of $B$ (qu)bits communication.

\begin{corollary}[Informal version of \cref{cor:congest}]\label{cor:congest-informal} For every fixed integer $t\ge 1$, in the two-node $\mathrm{CONGEST}$ model with bandwidth $B=\widetilde{\Theta}(n^2)$, there exists a family of total Boolean functions that can be computed in at most $2t+2$ rounds in the quantum setting, whereas every bounded-error
randomized classical algorithm requires
\[
    \widetilde{\Omega}
    \left(
        n^{1-\frac{1}{2t}}
    \right)
\]
rounds.
\end{corollary}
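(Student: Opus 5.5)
The corollary follows directly from \cref{th:main-informal} once the parameters are matched. Fix $t\ge 1$ and take the family $f:\{0,1\}^N\times\{0,1\}^N\to\{0,1\}$ given by the theorem. The plan is to identify the two-node CONGEST model with the two-party model: Alice and Bob are the two nodes, and a $T$-round algorithm with bandwidth $B$ is exactly a protocol with $T$ rounds, each carrying at most $B$ (qu)bits. Conversely, an $r$-round protocol whose messages have length at most $B$ is a valid $r$-round CONGEST algorithm.

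\textbf{Choosing $n$.} I will parametrize by a quantity $n$ for which the $(2t+2)$-round quantum complexity is $\widetilde{\Theta}(n^2)$. The separation exponent suggests how. A typical construction in the spirit of cheat-sheet/lifted functions has $\QCr{f}{2t+2}=\widetilde{O}(n^2)$ and $\RC{f}=\widetilde{\Omega}(n^{3-1/(2t)})$. Indeed, $(n^2)^{3/2-1/(4t)}=n^{3-1/(2t)}$, which is consistent with \cref{th:main-informal}. To avoid relying on that internal structure, I would define $n$ by $n^2:=\QCr{f}{2t+2}$, up to polylogarithmic factors. This quantity grows with $N$, since the randomized lower bound forces it to be unbounded along the family.

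\textbf{Quantum upper bound.} Set $B=\widetilde{\Theta}(n^2)$ at least as large as the total communication of the $(2t+2)$-round protocol. Every message then fits in one CONGEST round, so the quantum algorithm runs in at most $2t+2$ rounds. The last two rounds are classical, but that does not matter here.

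\textbf{Randomized lower bound.} A $T$-round randomized CONGEST algorithm gives a randomized protocol with total communication at most $TB$. Hence
\[
TB\ge \RC{f}=\widetilde{\Omega}\left((n^2)^{\frac32-\frac1{4t}}\right)=\widetilde{\Omega}\left(n^{3-\frac1{2t}}\right),
\]
and dividing by $B=\widetilde{\Theta}(n^2)$ gives $T=\widetilde{\Omega}(n^{1-1/(2t)})$.

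\textbf{Main obstacle.} The only delicate point is bookkeeping of polylogarithmic factors and the direction of the inequalities. The theorem gives an upper bound on $\QCr{f}{2t+2}$ and a lower bound on $\RC{f}$ in terms of it. So $B$ must be chosen as an upper bound on the quantum cost, and the division step needs $B$ to be no larger than $\widetilde{O}(n^2)$. Defining $n$ through the quantum cost ensures both, and the formal version (\cref{cor:congest}) would state $n$ as the concrete parameter of the construction in \cref{th:main-lb,th:main-ub}.
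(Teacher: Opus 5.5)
Your argument is essentially the paper's: fit the $2t+2$ messages of the $\widetilde{O}(n^2)$-qubit protocol of \cref{th:main-ub} into rounds of bandwidth $B_n=\widetilde{\Theta}(n^2)$, and divide the lower bound of \cref{th:main-lb} by the per-round cost. The paper bounds a $T$-round algorithm by $2TB_n$ rather than $TB$, since both nodes may send $B_n$ bits in each round, but this constant is harmless. The paper also takes $n$ directly as the construction parameter, as your last paragraph anticipates. This is the safer choice, because defining $n$ through $\QCr{f}{2t+2}$ requires two things the informal theorem does not guarantee: that this quantity is unbounded, and that the hidden polylogarithmic factors are polylogarithmic in the new $n$. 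The informal theorem only lower-bounds $\RC{f}$ in terms of $\QCr{f}{2t+2}$, so your remark that the randomized bound forces unboundedness does not follow.
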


In particular, when $t=1$, this gives a 4-rounds quantum algorithm and an $\widetilde{\Omega}(\sqrt{n})$ classical round lower bound.

\subsection{Technical overview}

We use the partial function $\Ff_n$ from Sherstov, Storozhenko, and Wu~\cite{Sherstov+SICOMP23} (or equivalently the corresponding result of Bansal and Sinha~\cite{BansalSinhaSTOC2021}). For every fixed $t$, this function can be computed with $t$ quantum queries and constant advantage, while its randomized query complexity is
\[
    \Omega\left( \frac{n^{1-\frac{1}{2t}}}{(\log n)^{2-\frac{1}{2t}}} \right).
\]
Following the cheat-sheet construction of Aaronson, Ben-David, and Kothari~\cite{Aaronson+STOC16}, we then compose $\Ff_n$ with the $\ANDOR$ tree and add a cheat-sheet array, obtaining a total query function $\Gg_n$. This amplifies the randomized query lower bound to
\[
    \RQ{\Gg_n} = \widetilde{\Omega}\left( n^{3-\frac{1}{2t}} \right).
\]
We finally compose every input bit of $\Gg_n$ with an inner-product gadget $\IP_m$, where $m=\Theta(\log n)$. The randomized lifting theorem of~\cite{Chattopadhyay+ICAP19} transfers the query lower bound to randomized communication complexity. The construction of $\Gg_n\circ\IP_m$ and its randomized lower bound follow the standard cheat-sheet and lifting framework. Our contribution is the constant-message quantum protocol described next.

For our quantum upper bound, consider one of the $10\log n$ instances of $\Ff_n\circ\ANDOR\circ\IP_m$. The $j$-th input bit seen by $\Ff_n$ is the value of an $\ANDOR$ tree whose leaves are inner-product gadgets distributed between Alice and Bob. To simulate a coherent query to this bit, Alice sends the query index, the target qubit, and a coherent copy of her entire block for the queried $\ANDOR$ instance.  Bob computes all relevant inner products, evaluates the $\ANDOR$ tree into the target qubit, and returns the registers. Alice then uncomputes her block. This is a direct BCW-style simulation: by sending her entire block, Alice enables Bob to evaluate the inner $\ANDOR$ function exactly within a single Alice-to-Bob-to-Alice exchange.

After the address has been computed, the remaining two messages are classical. Alice sends Bob her inner-product blocks for the addressed cheat-sheet cell, allowing Bob to reconstruct its contents. Bob then sends Alice the certificates together with his shares of the input bits referred to by them. Alice verifies the $\ANDOR$ certificates against the reconstructed input bits, thereby checking both their claimed values and the promise condition for $\Ff_n$.
Alice checks that the addressed cell contains valid certificates for the tentative address. This final verification ensures correctness even for arbitrary cheat-sheet contents and promise-violating inputs, as required because $\Gg_n$ is a total function.

\subsection{Related work}

Our work is closely related to two lines of research. The first is the study of parallel (non-adaptive) quantum query complexity. Carolan, Gilani, and Vempati~\cite{Carolan+ITCS25} used the cheat-sheet framework to obtain a quantum advantage with a constant number of rounds of parallel queries. Their motivation is different from ours: they study the power of parallel quantum queries, whereas we ask whether a quantum advantage in communication complexity requires many rounds of interaction. Nevertheless, the resulting algorithms share a similar high-level structure. In both settings, many quantum queries are arranged into a constant number of sequential stages, with all queries within each stage performed in parallel. In our communication setting, all queries in one such stage are simulated together through a single Alice-to-Bob-to-Alice exchange. Consequently, the number of communication rounds is determined by the number of parallel query stages rather than by the total number of queries. Our protocol then uses two additional classical messages to read the addressed cheat-sheet cell and verify the certificates contained in it.

The second is the study of bounded-round quantum communication complexity for set disjointness. Jain, Radhakrishnan, and Sen~\cite{Jain+FOCS2003} showed that any $r$-round quantum protocol for $\DISJ_n$ requires $\Omega(n/r^2)$ communication. Braverman et al.~\cite{BravermanSICOMP2018} subsequently improved this bound to $\widetilde{\Omega}(n/r+r)$, obtaining a near-optimal round and communication tradeoff.  In particular, any constant-round quantum protocol for $\DISJ_n$ requires $\Omega(n)$ communication. Thus, the unrestricted-round quadratic quantum advantage for disjointness disappears in the constant-round setting. Our result shows that this phenomenon is not universal: a total function can exhibit a polynomial quantum advantage even when the number of rounds is bounded by a constant.

\subsection{Concurrent work}

Independently and concurrently with our work, Gavinsky~\cite{Gavinsky2026} showed a total function that admits a 2-round quantum protocol with poly-logarithmic communication complexity and requires polynomial randomized communication complexity, which is stronger than our result.
\section{Preliminaries}

We refer to \cite{KushilevitzNoam1996,Buhrman+Review2010,RaoYehudayoff2020} for references in classical and quantum communication complexity.

\paragraph{Notations.}

For a possibly partial Boolean function $f:D\to\{0,1\}$, where $D\subseteq\{0,1\}^N$, we write
\begin{itemize}
    \item $\RQ{f}$ for its bounded-error randomized query complexity;
    \item $\QQ{f}$ for its bounded-error quantum query complexity.
\end{itemize}
For a possibly partial communication problem $F:D\to\{0,1\}$, where $D\subseteq\mathcal X\times\mathcal Y$, we write
\begin{itemize}
    \item $\RC{F}$ for its bounded-error randomized communication
    complexity;
    \item $\QC{F}$ for its bounded-error quantum communication
    complexity.
\end{itemize}

In this paper, we consider the quantum $r$-round bounded-error communication complexity, which is defined as follows.

\begin{definition}[Quantum $r$-round bounded-error communication complexity]
Let $F:\mathcal X\times \mathcal Y \to \{0,1\}$ be a possibly partial Boolean function. An $r$-round quantum communication protocol for $F$ is a protocol between Alice and Bob in which Alice receives $x\in\mathcal X$, Bob receives $y\in\mathcal Y$, and they exchange at most $r$ quantum messages, starting with a message from Alice to Bob. Between messages, each player may apply quantum operations to their private registers, depending on their own input and the messages received so far. The players do not share any prior entanglement. The communication cost of the protocol is the total number of qubits exchanged.

At the end of the protocol, one of the players outputs a bit $\Pi(x,y)$.\footnote{We allow either party to produce the final output. Requiring both parties to know the output would cost at most one additional one-bit message.} We say that the protocol computes $F$ with error at most $1/3$ if, for every $(x,y)\in\operatorname{Dom}(F)$,
\[
\Pr[\Pi(x,y)=F(x,y)]\ge 2/3.
\]
The $r$-round bounded-error quantum communication complexity of $F$, denoted by $\QCr{F}{r}$, is the minimum communication cost among all $r$-round quantum communication protocols that compute $F$ with error at most $1/3$.
\end{definition}

\paragraph{The two-node $\mathrm{CONGEST}(B)$ model.}

The two-node $\mathrm{CONGEST}(B)$ model can be viewed as two-party communication in which each party may send at most $B$ bits to the other per round.  In the quantum version, bits are replaced by qubits, and no prior entanglement is allowed. Hence, a $T$-round algorithm has a communication cost of at most $2TB$.

\paragraph{Query complexity separation using cheat sheets.}
We first review the results from \cite{Aaronson+STOC16}.

Let
\[
    \Ff_n:D_n\to\{0,1\},
    \qquad
    D_n\subseteq\{0,1\}^n,
\]
be a partial function exhibiting a quantum advantage (e.g. the Forrelation function \cite{Aaronson+SICOMP18}). 
The first idea is to combine this function with the AND-OR tree function $\ANDOR \colon \{0,1\}^{n^2} \to\{0,1\}$. This gives the (partial) function $\Ff_n\circ \ANDOR\colon \{0,1\}^{n^3} \to\{0,1\}$. Using the fact that the value of $\ANDOR$ can be certified by reading only $n$ bits of the input, we construct a cheat sheet to make the function total. 

Concretely, consider $10\log n$ independent instances of $\Ff_n\circ \ANDOR$. Introduce as an additional input an array called the ``cheat sheet'' consisting of $n^{10}$ cells, each cell containing $\widetilde{\Theta}(n^2)$ bits (note that the index of each cell can be specified by $10\log n$ bits). The input length then becomes $n^3\cdot 10\log n+\widetilde{\Theta}(n^{12})$. 
The resulting total function
\[
    \Gg_n:
    \{0,1\}^{n^3\cdot 10\log n+\widetilde{\Theta}(n^{12})}
    \to\{0,1\}
\]
evaluates to $1$ if and only if all $10\log n$ induced inputs to $\Ff_n$ lie in $\operatorname{Dom}(\Ff_n)$ and the cheat-sheet cell indexed by their output string contains valid certificates for all the $\ANDOR$, proving both that the induced inputs satisfy the promise and that their function values agree with the index of the cell. The cheat-sheet cells are part of the input and may therefore contain arbitrary strings.
If any certificate in the addressed cell is inconsistent with the corresponding $\ANDOR$ input, then it is invalid and $\Gg_n$ evaluates to $0$.

Theorem~5 and Lemma~6 of~\cite{Aaronson+STOC16} imply the following.

\begin{lemma}\label{lem:lb}
    $\RQ{\Gg_n}=\widetilde{\Omega} (n^2 \RQ{\Ff_n})$.
\end{lemma}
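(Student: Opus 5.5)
We derive the bound by combining the two ingredients the statement cites from~\cite{Aaronson+STOC16}: a composition lower bound for $\Ff_n\circ\ANDOR$, and the cheat-sheet lemma (Lemma~6). The cheat-sheet lemma says that attaching a cheat sheet to $10\log n$ independent copies of a partial function $h$ whose $1$-inputs have small certificates costs at most a logarithmic loss. Concretely, $\RQ{\Gg_n}=\Omega(\RQ{h}/\mathrm{polylog})$ with $h=\Ff_n\circ\ANDOR$. It therefore suffices to show
\[
\RQ{\Ff_n\circ\ANDOR}=\widetilde{\Omega}\bigl(n^2\,\RQ{\Ff_n}\bigr),
\]
which is the content of Theorem~5 of~\cite{Aaronson+STOC16}, namely the composition theorem $\RQ{f\circ g}=\Omega(\RQ{f}\cdot\RQ{g})$ for partial $f$ when $g=\ANDOR$ (or, more generally, for inner functions whose hardness can be witnessed in the right way). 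We also use the standard fact $\RQ{\ANDOR_{n^2}}=\Omega(n^2)$.

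\textbf{Step 1 (composition).} Given a randomized algorithm $A$ for $\Ff_n\circ\ANDOR$ with $T$ queries, we build an algorithm for $\Ff_n$ with $O(T/n^2)$ expected queries. Fix hard distributions $\mu_0,\mu_1$ on $0$- and $1$-inputs of $\ANDOR$ under which any algorithm distinguishing them with noticeable advantage must make $\Omega(n^2)$ queries. On input $z\in D_n$, we simulate $A$, filling block $j$ by sampling from $\mu_{z_j}$. We query $z_j$ only once $A$ has made many queries into block $j$; before that point, the answers are generated from a mixture that is nearly independent of $z_j$. A hybrid/averaging argument bounds the total error, provided $\ANDOR$ has a strong enough distinguishing lower bound. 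The standard route is to use that $\ANDOR$ is hard even in the sense that each block needs $\Omega(n^2)$ queries before the posterior on its value moves significantly. Any $\mathrm{polylog}$ loss from amplifying to small error per block is absorbed into the $\widetilde{\Omega}$.

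\textbf{Step 2 (cheat sheet).} Any algorithm for $\Gg_n$ must, on inputs where the cheat sheet is uninformative, effectively determine the $10\log n$ values of $\Ff_n\circ\ANDOR$. The reason is that the cell index is a random $(10\log n)$-bit string, and finding the valid cell among $n^{10}$ cells without knowing the index is costly. Via Lemma~6 this yields $\RQ{\Gg_n}=\Omega(\RQ{\Ff_n\circ\ANDOR}/\log n)$, or a comparable loss. Combining with Step~1 gives the claim.

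The main obstacle is Step~1 for a \emph{partial} outer function. Generic composition theorems for randomized query complexity can fail there, so we rely on the specific property of $\ANDOR$ that~\cite{Aaronson+STOC16} exploits. Since the lemma is stated as a direct consequence of Theorem~5 and Lemma~6, we would simply instantiate those results with $f=\Ff_n$ and verify the hypotheses. These are that $\Ff_n$ is a partial function on $n$ bits, and that $\ANDOR$ on $n^2$ bits has randomized complexity $\Omega(n^2)$ and certificate complexity $O(n)$, so that the cheat-sheet cells of size $\widetilde{\Theta}(n^2)$ suffice.
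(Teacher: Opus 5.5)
Your proposal is correct and is essentially the paper's own argument: the paper gives no separate proof, and obtains the lemma directly by instantiating Theorem~5 and Lemma~6 of~\cite{Aaronson+STOC16} with $f=\Ff_n$ (the $\Omega(n^2)$-factor composition with $\ANDOR$ for a partial outer function, plus the cheat-sheet lower bound losing at most polylogarithmic factors), which is exactly what your last paragraph does. Your Step~1 and Step~2 sketches are only heuristic and are not needed; if you want Step~1 self-contained, the clean route for a partial outer function is the exact simulation: answer $0$ in block $j$ until a secretly chosen uniformly random position is queried, and only then query $z_j$. This gives $\RQ{f\circ\mathrm{OR}_n}=\Omega(n\,\RQ{f})$ for every partial $f$, can be applied once for $\mathrm{OR}_n$ and once for $\mathrm{AND}_n$, and avoids the accumulation of per-block distinguishing errors that your hybrid sketch would have to control.
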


The best known separation between randomized and quantum query complexity was established independently by Sherstov, Storozhenko, and Wu~\cite{Sherstov+SICOMP23} and by Bansal and Sinha~\cite{BansalSinhaSTOC2021}, improving upon an earlier result of Tal~\cite{TalFOCS2020}.

\begin{lemma}[Corollary 1.2 in \cite{Sherstov+SICOMP23}; see also \cite{BansalSinhaSTOC2021}]
\label{lem:query-separation}
Let $t\ge 1$ be an integer.  There exists a partial function $f:D\to\{0,1\}$, where $D\subseteq\{0,1\}^n$, that can be computed using $t$ quantum queries with error at most $1/2-\Omega_t(1)$, while
\[
    \RQ{f}
    =
    \Omega\left(
        \frac{n^{1-\frac{1}{2t}}}
        {(\log n)^{2-\frac{1}{2t}}}
    \right).
\]
\end{lemma}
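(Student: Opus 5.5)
This lemma is cited from \cite{Sherstov+SICOMP23} rather than proved in the paper, so the plan is to reconstruct the argument behind it. The function is a rorrelation/$k$-fold Forrelation-type problem. Fix $t$. Take $n = 2t\cdot N$ with $N=2^\ell$, and split the input into $2t$ blocks $x^{(1)},\dots,x^{(2t)}\in\{\pm1\}^N$. Define
\[
\phi(x)=\frac{1}{N}\sum_{i_1,\dots,i_{2t}} x^{(1)}_{i_1}H_{i_1i_2}x^{(2)}_{i_2}H_{i_2i_3}\cdots H_{i_{2t-1}i_{2t}}x^{(2t)}_{i_{2t}},
\]
where $H$ is the normalized Hadamard (or a random orthogonal) matrix. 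The partial function $f$ asks whether $\phi(x)\ge \delta$ or $|\phi(x)|\le \delta/2$, for a suitable constant-type threshold $\delta=\delta_t$.

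\textbf{Quantum upper bound.} The plan is to write $\phi(x)=\langle u|D_1 H D_2 H\cdots H D_{2t}|u\rangle$, where $D_j=\mathrm{diag}(x^{(j)})$ and $u$ is the uniform state. Grouping the product symmetrically, $\phi=\langle \psi_1|\psi_2\rangle$, where each of $\psi_1,\psi_2$ is prepared from $|u\rangle$ by applying $t$ phase oracles interleaved with $H$. A Hadamard test on these two states needs only $t$ queries in total. This is done by querying blocks $1..t$ and $t{+}1..2t$ in a controlled fashion on a single register (the trick of Aaronson--Ambainis), which yields acceptance probability $(1+\phi)/2$. The resulting advantage is $\Omega(\delta)=\Omega_t(1)$.

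\textbf{Randomized lower bound.} Here I would follow the Fourier-growth method of Raz--Tal and Tal. Build a distribution $\mathcal D$ as follows: sample a Gaussian vector $z$ with covariance chosen so that $\mathbb E[z^{(1)}\otimes\cdots]$ correlates along $H$, scaled by $\varepsilon\approx 1/\sqrt{\log n}$, and truncate/round it to $\pm1$. The steps, in order, are:
\begin{enumerate}
\item Show that $\mathbb E_{\mathcal D}[\phi]\ge\delta$ while $\phi$ is small on the uniform distribution $U$.
\item Observe that any depth-$d$ decision tree has level-$\ell$ Fourier $\ell_1$-weight at most $\sqrt{\binom{d}{\ell}}\,\mathrm{polylog}$.
\item Bound the distinguishing advantage by $\sum_\ell \varepsilon^\ell\,N^{-\ell(1-1/2t)/2}$ times that level weight, which uses the structure of $H$-chained moments: the only nonzero moments occur at levels that are multiples of $2t-1$ \ldots{} roughly $(d/N^{1-1/(2t)})^{\ell/2}$ up to logs.
\item Conclude via the polarization/random-walk lemma that moving along the path requires $d\ge \tilde\Omega(N^{1-1/(2t)})$, with the stated log exponent $2-\tfrac1{2t}$ coming from $\varepsilon$ and the truncation.
\end{enumerate}

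\textbf{Main obstacle.} The main obstacle is step (ii)--(iii): obtaining a tight level-$\ell$ bound for decision trees at \emph{all} levels, not just level $2$. The first approach I would try is the one in \cite{Sherstov+SICOMP23}, namely bounding the correlation of the tree with products of $\ell$ variables via martingale/$L_2$ arguments. If that proves too lossy, the fallback is the Bansal--Sinha approach, which controls the tree through low-degree Gaussian interpolation with bounded Fourier growth.
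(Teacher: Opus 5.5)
The paper does not prove this lemma. It imports it from Sherstov--Storozhenko--Wu (their all-level Fourier-growth bound fed into Tal's framework) and from Bansal--Sinha. Your quantum half is fine. Querying block $s$ on the control-$0$ branch and block $2t+1-s$ on the control-$1$ branch, plus one query-free $H$, gives a Hadamard test that uses $t$ queries and accepts with probability $(1+\phi)/2$.

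\textbf{First gap: the scaling of $\varepsilon$.} The lower bound already fails at Step~1. With the Raz--Tal scaling $\varepsilon\approx 1/\sqrt{\log n}$ (chosen there so that a union bound rules out truncation), every correlation in $\mathcal D$ is of order $\varepsilon^2$. Every Wick term of $\mathbb E_{\mathcal D}[\phi]$ therefore carries a factor $\varepsilon^{2t}$, so $\mathbb E_{\mathcal D}[\phi]=(\log n)^{-\Omega(1)}$, not a constant $\delta_t$. The consequences are:
\begin{itemize}
\item $\mathcal D$ does not sit on yes-instances of your constant-threshold $f$. For a partial function you would need $\phi\ge\delta$ with high probability under $\mathcal D$, not just a large mean. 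So no decision-tree bound for $\mathcal D$ versus uniform transfers to $f$.
\item If you instead lower the threshold to match, your $t$-query test has only $(\log n)^{-\Omega(1)}$ advantage. That contradicts error $1/2-\Omega_t(1)$, and amplifying by repetition costs polylogarithmically many queries.
\end{itemize}
The statement as written forces a correlation strength that is a constant depending only on $t$. In that regime a constant fraction of coordinates are truncated and the union bound is unavailable. Controlling the rounding there is where Bansal--Sinha's new interpolation and integration-by-parts identities for rounding Gaussian vectors come in. It also means your accounting of the logarithmic exponent through $\varepsilon$ has to change.

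\textbf{Second gap: the hard distribution cannot be a single centered Gaussian when $t\ge 2$.} A centered Gaussian is determined by its covariances, so $\mathbb E[\phi]\neq 0$ forces cross-block covariances $\Sigma^{ab}$ whose normalized Frobenius norm $\|\Sigma^{ab}\|_F/\sqrt N$ is bounded below in terms of $\mathbb E[\phi]$. For example, for the consecutive Wick matching, $\|H\circ\Sigma\|_F=\|\Sigma\|_F/\sqrt N$ bounds the term by $\prod\|\Sigma^{ab}\|_F/\sqrt N$.
\begin{itemize}
\item These are nonzero level-$2$ moments, so in your Step~3 the $\ell=2$ term caps the bound near $\sqrt N$.
\item This is not an artifact of the analysis. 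The bounded bilinear statistic $\langle x^{(a)},\Sigma^{ab}x^{(b)}\rangle/(N\|\Sigma^{ab}\|_{\mathrm{op}})$ separates $\mathcal D$ from uniform with gap at least comparable to $\mathbb E[\phi]$. Aaronson--Ambainis estimate such statistics classically with $\widetilde O(\sqrt N)$ queries.
\item This is why a Gaussian works for $t=1$ (Forrelation) and not beyond. Your remark that nonzero moments occur only at special high levels is already incompatible with a Gaussian.
\end{itemize}
What you need is a non-Gaussian distribution, e.g.\ one built from independent Gaussian vectors attached to the $2t-1$ links of the chain. Its moments must vanish below level $2t$ and be of size about $N^{-\frac{\ell}{2}(1-\frac{1}{2t})}$ at level $\ell$. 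Compare Bansal--Sinha's reduction to Fourier levels $k$ through $k(k-1)$.

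\textbf{Step~2 is not an observation.} The bound $c^{\ell}\sqrt{\binom{d}{\ell}}(1+\log n)^{\ell-1}$ at every level is the main theorem of Sherstov--Storozhenko--Wu, and your plan only names it. Citing it, as the paper effectively does, is legitimate. But then the surrounding construction has to be the right one, and here neither the distribution nor the choice of $\varepsilon$ is.
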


When $t=1$, the result above recovers, up to polylogarithmic factors, the separation of Aaronson and Ambainis~\cite{Aaronson+SICOMP18} based on the Forrelation problem: one quantum query achieves a constant advantage over random guessing, whereas the randomized query complexity is $\widetilde{\Omega}(\sqrt n)$. 

In the remainder of this paper, we let $\Ff_n$ be the partial function guaranteed by \cref{lem:query-separation}.

\paragraph{Query-to-communication lifting technique.}
To convert the (total) function~$\Gg_n$, defined above in the query complexity setting, into a (total) function defined in the communication complexity setting, we use the ``lifting'' technique from \cite{Chattopadhyay+ICAP19} based on the two-party computation of the inner product function $\IP_{m}\colon\{0,1\}^m\times \{0,1\}^m\to \{0,1\}$.

\begin{lemma}[\cite{Chattopadhyay+ICAP19}]\label{lem:lifting}
    For any function $f\colon\{0,1\}^s\to \{0,1\}$, there exists a constant $c$ such that  
    \[
        \RC{f\circ \IP_{c\log s}}=\Omega(\RQ{f}\log s).
    \]
\end{lemma}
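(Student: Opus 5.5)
This statement is the randomized lifting theorem of Chattopadhyay, Filmus, Koroth, Meir, and Pitassi, imported as a black box. Within this paper nothing needs to be proved; one only has to check that it applies to the total function $\Gg_n$ with $s$ equal to its input length. If I were to reconstruct the argument, I would follow the standard simulation approach: turn any randomized protocol $\Pi$ for $f\circ\IP_m$ of cost $C$, with $m=c\log s$, into a randomized decision tree for $f$ making $O(C/\log s)$ queries. Taking $\Pi$ to be optimal then gives $\RQ{f}=O(\RC{f\circ\IP_m}/\log s)$, which rearranges to the claim.

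The simulation runs as follows. The decision tree, on input $z\in\{0,1\}^s$, samples a uniformly random pair $(X,Y)\in(\{0,1\}^m)^s\times(\{0,1\}^m)^s$ conditioned on $\IP_m(X_i,Y_i)=z_i$ on the coordinates queried so far. It maintains a rectangle $\mathcal X\times\mathcal Y$ and walks down the protocol tree of $\Pi$, fixing the public randomness first.

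The invariant is that, on the set $J$ of unqueried coordinates, the current rectangle is \emph{dense*}: both the $X_J$ and the $Y_J$ marginals have blockwise min-entropy at least $(1-\delta)m$ per coordinate, up to an appropriate deficiency. When a message of $\Pi$ is sent, the tree restricts to the corresponding half-rectangle. If some coordinate set $I\subseteq J$ has its blockwise min-entropy violated, the tree first restricts to a maximal violating event, which fixes the corresponding blocks. It then queries $z_I$ and conditions on $\IP_m(X_i,Y_i)=z_i$ for $i\in I$, restoring the invariant.

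The heart of the argument is a density-restoring partition together with an amortized entropy count. Each bit communicated lowers the total min-entropy by about one bit, and each query on a violated coordinate recovers $\Omega(\delta m)=\Omega(\log s)$ bits. Hence the number of queries is $O(C/\log s)$ in expectation, and Markov's inequality plus a constant-factor loss in error gives a worst-case bound.

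The correctness step, which I regard as the main obstacle, is showing that $\IP_m$ acts as a good extractor on dense* rectangles. For any rectangle in which $X_J$ and $Y_J$ are independent with high blockwise min-entropy, the vector $(\IP_m(X_i,Y_i))_{i\in J}$ must be $s^{-\Omega(1)}$-close to uniform. This requires a Fourier/discrepancy estimate on $\IP_m$ valid for all subsets of coordinates simultaneously, obtained via Vazirani's XOR lemma, and it forces $m\ge c\log s$. Given this estimate, the simulated transcript distribution is within small statistical distance of the true one, so the tree inherits the error of $\Pi$ plus $o(1)$. Since all of this is proved in the cited work, I would in the paper simply invoke it, noting that it holds for every total $f$, and in particular for $\Gg_n$.
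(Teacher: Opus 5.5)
Your proposal matches the paper, which gives no proof of this lemma and simply imports it from Chattopadhyay et al.; the only in-paper step is applying it to $\Gg_n$ with $s=N_n$ and $m=\lceil c_0\log N_n\rceil$, exactly as you describe. Your outline of the simulation is a reasonable sketch of the cited proof, though not needed here: dense rectangles, density-restoring partitions, an amortized min-entropy count, and the XOR-lemma extractor estimate for $\IP_m$. It also usefully makes explicit that $c$ is a universal constant, which the paper implicitly relies on when it fixes a single $c_0$ for the whole family $\Gg_n$.
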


\section{Proofs}

Throughout this section, we regard $t$ as a fixed constant. 

Let $N_n$ denote the input length of $\Gg_n$.  By construction,
\[
    N_n = n^3\cdot 10\log n+\widetilde{\Theta}(n^{12}) = \widetilde{\Theta}(n^{12}),
\]
and hence
\[
    \log N_n=\Theta(\log n).
\]
Let $c_0>0$ be the constant from \cref{lem:lifting}, and set
\[
    m := \left\lceil c_0\log N_n\right\rceil =
    \Theta(\log n).
\]
Applying $\IP_m$ coordinate-wise to the input of $\Gg_n$ gives the two-party communication problem
\[
    \Gg_n\circ\IP_m: (\{0,1\}^m)^{N_n} \times (\{0,1\}^m)^{N_n} \to\{0,1\}.
\]

Combining \cref{lem:lb,lem:query-separation,lem:lifting} gives the following lower bound.

\begin{theorem}\label{th:main-lb}
\[
    \RC{\Gg_n\circ\IP_m}
    =
    \widetilde{\Omega}
    \left(
        n^{3-\frac{1}{2t}}
    \right).
\]
\end{theorem}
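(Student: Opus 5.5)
The plan is to chain the three earlier lemmas, in the order query separation, then cheat-sheet amplification, then lifting. First, by \cref{lem:query-separation}, $\Ff_n$ satisfies
\[
    \RQ{\Ff_n}=\Omega\left(\frac{n^{1-\frac{1}{2t}}}{(\log n)^{2-\frac{1}{2t}}}\right)=\widetilde{\Omega}\left(n^{1-\frac{1}{2t}}\right).
\]
Because $t$ is a fixed constant, the polylogarithmic denominator is absorbed into the $\widetilde{\Omega}$.

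Second, \cref{lem:lb} converts this into a bound for the total function:
\[
    \RQ{\Gg_n}=\widetilde{\Omega}\left(n^2\,\RQ{\Ff_n}\right)=\widetilde{\Omega}\left(n^{3-\frac{1}{2t}}\right).
\]

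Third, I apply \cref{lem:lifting} with $f=\Gg_n$ and $s=N_n$. This gives
\[
    \RC{\Gg_n\circ\IP_{c_0\log N_n}}=\Omega\left(\RQ{\Gg_n}\log N_n\right).
\]
Since $\log N_n=\Theta(\log n)$, this lifted bound is at least $\widetilde{\Omega}(n^{3-\frac{1}{2t}})$, and the extra logarithmic factor can simply be discarded.

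The only point needing care is that \cref{lem:lifting} is stated for the gadget size $c\log s$, whereas we defined $m=\lceil c_0\log N_n\rceil$. I will handle this with a reduction. If $c_0\log N_n$ is not an integer, then $\IP_{\lfloor c_0\log N_n\rfloor}$ (or $\IP$ at whatever integer length the lemma actually uses, up to $m$) embeds into $\IP_m$ by padding both players' blocks with zeros. Padding changes no inner product, so any protocol for $\Gg_n\circ\IP_m$ yields a protocol of the same cost for the smaller-gadget composition. Therefore the lower bound transfers to $\Gg_n\circ\IP_m$ with only constant-factor changes.

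I expect no real obstacle in this argument. The only thing to check is that every hidden factor (the $\log n$ powers depending on the constant $t$, and the constant $c_0$) is polylogarithmic in $n$, so that it is absorbed by $\widetilde{\Omega}$.
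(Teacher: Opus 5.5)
Your proposal is correct and takes the same route as the paper: it chains \cref{lem:query-separation}, \cref{lem:lb}, and \cref{lem:lifting} (with $s=N_n$ and $\log N_n=\Theta(\log n)$), and absorbs the polylogarithmic factors into $\widetilde{\Omega}$. Your zero-padding argument, which reconciles $m=\lceil c_0\log N_n\rceil$ with the gadget length in the lifting lemma, is valid and pads in the correct direction; the paper simply takes this detail for granted.
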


\begin{proof}
By \cref{lem:lifting} and the choice of $m$,
\[
    \RC{\Gg_n\circ\IP_m}
    =
    \Omega\left(
        \RQ{\Gg_n}\log N_n
    \right).
\]
Using \cref{lem:lb,lem:query-separation} and $\log N_n=\Theta(\log n)$, we obtain
\[
\begin{aligned}
    \RC{\Gg_n\circ\IP_m}
    &=
    \widetilde{\Omega}
    \left(
        n^2\RQ{\Ff_n}
    \right)\\
    &=
    \widetilde{\Omega}
    \left(
        n^{3-\frac{1}{2t}}
    \right).
\end{aligned}
\]
\end{proof}

We next prove our main upper bound.

\begin{theorem}\label{th:main-ub}
\[
    \QCr{\Gg_n\circ\IP_m}{2t+2} = \widetilde{O}(n^2).
\]
\end{theorem}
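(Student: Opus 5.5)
We design a $(2t+2)$-round protocol made of $t$ Alice-to-Bob-to-Alice quantum exchanges, one per query layer of the $t$-query algorithm for $\Ff_n$ from \cref{lem:query-separation}, followed by two classical messages that read and verify the addressed cheat-sheet cell. Let $\mathcal A$ be the $t$-query algorithm for $\Ff_n$ with advantage $\Omega_t(1)$. We first amplify its advantage so that all $10\log n$ instances are computed correctly with probability, say, $9/10$. For this, Alice runs $k=\Theta(\log\log n)$ (or $O(\log n)$) independent copies of $\mathcal A$ for each of the $10\log n$ instances \emph{in parallel}, and at the end takes majority votes. Parallel repetition does not increase the query depth, so the whole collection of $K=\widetilde O(1)$ runs still consists of $t$ sequential query layers, with each layer containing $K$ queries.

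To simulate one layer, Alice prepares, for each of the $K$ query registers $|j,b\rangle$ (with $j\in[n]$ indexing an $\ANDOR$ instance of the relevant copy of $\Ff_n\circ\ANDOR$), a coherent copy of her entire block $x^{(j)}\in(\{0,1\}^m)^{n^2}$ of that $\ANDOR$ instance, controlled on $j$. She sends the index, target and block registers to Bob, costing $O(\log n + mn^2)=\widetilde O(n^2)$ qubits per query and $\widetilde O(n^2)$ per layer. Bob, knowing $y$, applies the unitary $|j,b,x^{(j)}\rangle\mapsto|j,b\oplus\ANDOR(\IP_m(x^{(j)}_\ell,y^{(j)}_\ell))_\ell,x^{(j)}\rangle$. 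This requires reading his block controlled on $j$, computing the inner products and the tree value into scratch space, XORing into $b$, and uncomputing, so that no garbage remains on his side. He then returns the registers, and Alice uncomputes the block copy. This implements the exact query oracle. After $t$ layers (i.e.\ $2t$ messages), Alice measures and obtains a tentative address $a\in\{0,1\}^{10\log n}$. With probability $\ge 9/10$, $a$ equals the true outputs whenever all instances satisfy the promise.

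In the classical phase, Alice sends her $\IP_m$ shares of cell $a$, which is $\widetilde O(n^2)$ bits. Bob reconstructs the cell contents, namely the claimed certificates for all $10\log n\cdot n$ $\ANDOR$ instances. He then sends back these certificates (or the cell contents) together with his shares $y_\ell$ of every leaf referenced by them. Each certificate touches $O(n)$ leaves, so this amounts to $\widetilde O(n^2)\cdot m$ bits. Alice computes the referenced leaf values and checks the following: each certificate is valid for its $\ANDOR$ instance with the claimed value; the claimed bit strings form inputs in $\mathrm{Dom}(\Ff_n)$ whose $\Ff_n$-values equal $a$, as the cheat-sheet definition requires; and she outputs $1$ iff all checks pass. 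The total cost is therefore $\widetilde O(n^2)$ over $2t+2$ messages.

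For correctness, we argue as follows. If $\Gg_n\circ\IP_m=1$, then the promise holds and the true cell $a^*$ contains valid certificates. With probability $\ge 2/3$ we have $a=a^*$, and the checks then pass. If the value is $0$, then no cell passes verification, because valid certificates pin down the true $\ANDOR$ values and hence the true $\Ff_n$ inputs; so for every $a$ the checks fail, and we answer correctly with certainty. The main obstacle is making this verification argument airtight against arbitrary cheat-sheet content and promise-violating inputs. In particular, we must confirm that Alice can check membership in $\mathrm{Dom}(\Ff_n)$ and the value of $\Ff_n$ from the certified $n$-bit strings alone, which she can, since these strings are fully determined by the certificates. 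A secondary point is ensuring that the coherent oracle simulation leaves no entanglement with Bob's workspace.
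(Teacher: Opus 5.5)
Your proposal is correct and follows essentially the same route as the paper. It uses $t$ Alice-to-Bob-to-Alice exchanges in which Alice coherently loads her entire $mn^2$-bit block for the queried AND-OR instance, with all amplified copies of the $t$-query algorithm run in parallel within each exchange. This is followed by one classical message carrying Alice's shares of the addressed cell, then one classical reply carrying the certificates and Bob's shares of the referenced leaves, and perfect soundness on $0$-inputs comes from the certificate check. The differences are cosmetic. You use $\Theta(\log\log n)$ repetitions per instance instead of the paper's $\Theta(\log n)$, which changes only polylogarithmic factors. You also leave implicit that Alice must send the $O(\log n)$-bit address $a$ along with her shares, so that Bob knows which of his blocks to use.
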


\begin{proof}
Set
\[
    L:=10\log n.
\]
To compute $\Gg_n\circ\IP_m$, we perform the following three tasks:
\begin{enumerate}
    \item solve the $L$ instances of $\Ff_n\circ\ANDOR\circ\IP_m$ to obtain a tentative cheat-sheet address;
    \item read the certificates stored at that address in the cheat sheet;
    \item verify the certificate data in the addressed cell against the instance part of the input.
\end{enumerate}

\paragraph{Implementation of Task 1.}

Fix one of the $L$ instances and suppress its instance index from the notation.  For each $j\in[n]$, write
\[
    x_j=(x_{j,r})_{r\in[n^2]}
    \quad\text{and}\quad
    y_j=(y_{j,r})_{r\in[n^2]},
\]
where
\[
    x_{j,r},y_{j,r}\in\{0,1\}^m.
\]
Define
\[
    z_j
    :=
    \ANDOR\left(
        \bigl(
            \IP_m(x_{j,r},y_{j,r})
        \bigr)_{r\in[n^2]}
    \right).
\]
Then
\[
    z=(z_1,\ldots,z_n)\in\{0,1\}^n
\]
is the input induced for $\Ff_n$.  Note that $z$ need not lie in $\operatorname{Dom}(\Ff_n)$.

A query to $z$ can be simulated coherently by a two-message Alice-to-Bob-to-Alice protocol.  Let $I$ be the query-index register and let $T$ be the target qubit.  Alice introduces an $mn^2$-qubit auxiliary register $M$ initialized to $\ket{0^{mn^2}}$ and applies the local unitary
\[
    \ket{j}_I\ket{u}_M
    \longmapsto
    \ket{j}_I\ket{u\oplus x_j}_M.
\]
In particular,
\[
    \ket{j}_I\ket{0^{mn^2}}_M
    \longmapsto
    \ket{j}_I\ket{x_j}_M.
\]
Alice sends the registers $I,M,T$ to Bob.  Regarding $u\in\{0,1\}^{mn^2}$ as
\[
    u=(u_r)_{r\in[n^2]},
    \qquad
    u_r\in\{0,1\}^m,
\]
Bob coherently applies
\[
\begin{aligned}
    \ket{j}_I\ket{u}_M\ket{b}_T
    \longmapsto
    \ket{j}_I\ket{u}_M
    \ket{
        b\oplus
        \ANDOR\left(
            \bigl(
                \IP_m(u_r,y_{j,r})
            \bigr)_{r\in[n^2]}
        \right)
    }_T.
\end{aligned}
\]
Bob sends all three registers back to Alice, who reverses her loading operation.  This returns $M$ to $\ket{0^{mn^2}}$ and exactly implements
\[
    O_z\ket{j}_I\ket{b}_T
    =
    \ket{j}_I\ket{b\oplus z_j}_T.
\]
Thus, one query to $z$ can be simulated using two messages and
\[
    2\bigl(
        mn^2+\lceil\log n\rceil+1
    \bigr)
\]
qubits of communication.

By \cref{lem:query-separation}, $\Ff_n$ admits a $t$-query quantum algorithm $\mathcal A$ that, on promised inputs, succeeds with probability at least
\[
    \frac12+\gamma
\]
for some constant $\gamma>0$ depending only on the fixed constant $t$.

We apply $\mathcal A$ to each of the $L$ instances.  For each instance, we run
\[
    R:=\Theta(\log n)
\]
independent copies of $\mathcal A$ and take the majority of their outputs.  By choosing the constant implicit in $R$ sufficiently large, a Chernoff bound reduces the error probability for each promised instance to $n^{-\Omega(1)}$.  A union bound over the $L$ instances then shows that all $L$ outputs are simultaneously correct with probability
\[
    1-n^{-\Omega(1)},
\]
provided that all the induced inputs satisfy the promise.

All $LR$ copies are run in parallel.  At each query step $s\in[t]$, the $s$-th oracle queries of all copies are simulated simultaneously within one Alice-to-Bob-to-Alice exchange. Therefore, the amplification does not increase the number of messages.

Let
\[
    a=(a_1,\ldots,a_L)\in\{0,1\}^L
\]
denote the tentative address obtained by Alice.  Task~1 uses $t$ Alice-to-Bob-to-Alice exchanges, hence $2t$ messages, and has total communication
\[
\begin{aligned}
    2tLR
    \bigl(
        mn^2+\lceil\log n\rceil+1
    \bigr)
    &=
    O\left(
        \log^2 n\,
        (mn^2+\log n)
    \right)\\
    &=
    \widetilde{O}(n^2),
\end{aligned}
\]
where we use $m=\Theta(\log n)$ and the fact that $t$ is constant.

\paragraph{Implementation of Task 2.}

Let
\[
    C=\widetilde{\Theta}(n^2)
\]
denote the number of bits in each cheat-sheet cell.  For every
address $\alpha\in\{0,1\}^L$, let
\[
    c_\alpha
    =
    (c_{\alpha,\ell})_{\ell\in[C]}
    \in\{0,1\}^C
\]
denote the contents of the cheat-sheet cell indexed by $\alpha$ before composition with $\IP_m$.

In the communication problem, for every $\alpha\in\{0,1\}^L$ and $\ell\in[C]$, Alice and Bob hold blocks
\[
    x_{\alpha,\ell},y_{\alpha,\ell}\in\{0,1\}^m,
\]
respectively, such that
\[
    c_{\alpha,\ell}
    =
    \IP_m(x_{\alpha,\ell},y_{\alpha,\ell}).
\]

Since Alice knows the tentative address $a$, she sends Bob $a$
together with all of her gadget inputs for the corresponding
cell,
\[
    X_a
    :=
    (x_{a,\ell})_{\ell\in[C]}.
\]
Bob selects the corresponding blocks
\[
    Y_a
    :=
    (y_{a,\ell})_{\ell\in[C]}
\]
from his input and reconstructs the addressed cell
coordinate-wise as
\[
    c_a
    =
    \bigl(
        \IP_m(x_{a,\ell},y_{a,\ell})
    \bigr)_{\ell\in[C]}.
\]
Thus, Bob reconstructs the entire contents of the addressed
cheat-sheet cell.  This step consists of one message from Alice
to Bob and uses
\[
    L+mC
    =
    \widetilde{O}(n^2)
\]
bits of classical communication.

\paragraph{Implementation of Task 3.}

Bob first checks whether the reconstructed data $c_a$ are
syntactically well formed.  If not, he sends Alice a rejection
flag, and Alice rejects.  Otherwise, let $S_a$ denote the set of
coordinates in the instance part of the input whose values are
referred to by the certificates contained in $c_a$.

For every $q\in S_a$, Alice and Bob hold blocks
\[
    x_q,y_q\in\{0,1\}^m,
\]
respectively, such that the corresponding bit of the input before
composition with $\IP_m$ is
\[
    u_q=\IP_m(x_q,y_q).
\]
Bob sends Alice the certificate data $c_a$ together with all of
his gadget inputs corresponding to the coordinates in $S_a$,
namely,
\[
    (y_q)_{q\in S_a}.
\]
Alice combines them with her corresponding gadget inputs
\[
    (x_q)_{q\in S_a}
\]
and reconstructs all the input bits referred to by the
certificates:
\[
    (u_q)_{q\in S_a}
    =
    \bigl(
        \IP_m(x_q,y_q)
    \bigr)_{q\in S_a}.
\]

Alice checks the $\ANDOR$ certificates in $c_a$ against the reconstructed input bits. She accepts if and only if these certificates are valid, namely, if they certify that all the induced inputs lie in $\operatorname{Dom}(F_n)$ and that their output string is $a$.

Each of the $L$ instances of $\Ff_n\circ\ANDOR$ contains $n$
occurrences of $\ANDOR$, so there are $Ln$ such occurrences in
total.  Since a certificate for each occurrence refers to at
most $n$ input bits,
\[
    |S_a|
    \le
    Ln^2
    =
    \widetilde{O}(n^2).
\]
Moreover,
\[
    |c_a|
    =
    C
    =
    \widetilde{O}(n^2).
\]
Thus, Task~3 consists of one message from Bob to Alice and uses
\[
    |c_a|+m|S_a|
    =
    \widetilde{O}(n^2)
\]
bits of classical communication.

\paragraph{Correctness and complexity.}
Whenever Task~3 accepts, the verified certificates show that all the induced inputs satisfy the promise of $\Ff_n$ and that their output string is $a$. Hence $a$ is the correct cheat-sheet address and the addressed cell contains valid certificate data, so the input is a $1$-input of $\Gg_n\circ\IP_m$. Conversely, on a $1$-input, Task~1 outputs the correct address with probability $1-n^{-\Omega(1)}$. Conditioned on this event, Task~2 reconstructs the addressed cell exactly, and Task~3 accepts. Therefore, the protocol computes $\Gg_n\circ\IP_m$ with bounded error.

Task~1 uses $2t$ messages, Task~2 uses one message from Alice to Bob, and Task~3 uses one message from Bob to Alice. Hence the
protocol uses at most
\[
    2t+2
\]
messages and
\[
    \widetilde{O}(n^2)
\]
qubits of communication in total.  
The classical messages in Tasks~2 and~3 can be sent as computational-basis qubits.
\end{proof}

\cref{th:main-lb,th:main-ub} have an implication for an exponential round separation between the classical and quantum two-node CONGEST model.

\begin{corollary}\label{cor:congest}
For every fixed integer $t\ge 1$, let $H_n:=\Gg_n\circ\IP_m$ be the total Boolean function constructed above.  There exists a bandwidth parameter $B_n=\widetilde{\Theta}(n^2)$ such that $H_n$ can be computed in at most $2t+2$ rounds in the two-node quantum $\mathrm{CONGEST}(B_n)$ model.  On the other hand, every bounded-error randomized classical $\mathrm{CONGEST}(B_n)$ algorithm for $H_n$ requires
\[
    \widetilde{\Omega}
    \left(
        n^{1-\frac{1}{2t}}
    \right)
\]
rounds.
\end{corollary}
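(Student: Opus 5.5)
The corollary follows from \cref{th:main-ub,th:main-lb} by translating between communication cost and CONGEST rounds.

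\textbf{Quantum upper bound.} By \cref{th:main-ub} there is a $(2t+2)$-message quantum protocol $\Pi$ for $H_n$ with total cost at most $K_n=\widetilde{O}(n^2)$. We set $B_n:=K_n$, padded if necessary so that $B_n=\widetilde{\Theta}(n^2)$; the explicit accounting in the proof already gives cost $\Theta(mn^2\log^2 n)$, so this is natural. Each message of $\Pi$ has at most $K_n\le B_n$ qubits, so it fits into one round. We map the $k$-th message of $\Pi$ to round $k$ of the CONGEST algorithm, with the sender fixed by the message's direction and the other party sending nothing, or a dummy qubit, in that round. Local operations between messages carry over unchanged. The messages alternate Alice-to-Bob and Bob-to-Alice, and in a CONGEST round both parties may send simultaneously, so the simulation uses at most $2t+2$ rounds and has the same output distribution. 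Hence $H_n$ is computed with error at most $1/3$ in $2t+2$ rounds.

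\textbf{Classical lower bound.} Take any bounded-error randomized $T$-round classical $\mathrm{CONGEST}(B_n)$ algorithm for $H_n$. In each round both nodes send at most $B_n$ bits. Sequentializing each round, with Alice's message sent first and Bob's second, each depending only on information available at the start of the round, gives a two-party randomized protocol of cost at most $2TB_n$. Public versus private randomness does not matter: the lower bound of \cref{lem:lifting} holds for public-coin protocols and private coins can be simulated publicly. Thus $2TB_n\ge \RC{H_n}=\widetilde{\Omega}(n^{3-\frac{1}{2t}})$ by \cref{th:main-lb}. Dividing by $B_n=\widetilde{\Theta}(n^2)$ gives $T=\widetilde{\Omega}(n^{1-\frac{1}{2t}})$.

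\textbf{Main subtlety.} The only real point is that $B_n$ must be \emph{both} at least the protocol's largest message and at most $\widetilde{O}(n^2)$, so that the polylogarithmic factors cancel correctly in the division. Choosing $B_n$ equal to the total cost of the protocol from \cref{th:main-ub} satisfies both requirements, and the rest is bookkeeping.
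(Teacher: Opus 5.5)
Your proposal is correct and follows essentially the same approach as the paper's proof. For the upper bound, you choose $B_n=\widetilde{\Theta}(n^2)$ large enough that each of the $2t+2$ messages of the protocol from \cref{th:main-ub} fits into one round; for the lower bound, you turn a $T$-round classical algorithm into a two-party protocol of cost at most $2TB_n$ and divide the bound of \cref{th:main-lb} by $B_n$. Your extra remarks on sequentializing simultaneous sends, public versus private coins, and cancelling the polylogarithmic factors are sound, and the paper leaves them implicit.
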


\begin{proof}
The quantum upper bound follows directly from \cref{th:main-ub}: choose $B_n=\widetilde{\Theta}(n^2)$ large enough that each of the $2t+2$ messages fits into one distributed round.

Conversely, a $T$-round classical $\mathrm{CONGEST}(B_n)$ algorithm on a two-node network induces a two-party protocol of communication cost at most $2TB_n$.
Therefore, by \cref{th:main-lb},
\[
    2TB_n
    \ge
    \widetilde{\Omega}
    \left(
        n^{3-\frac{1}{2t}}
    \right).
\]
Since $B_n=\widetilde{\Theta}(n^2)$, we obtain
\[
    T
    =
    \widetilde{\Omega}
    \left(
        n^{1-\frac{1}{2t}}
    \right).
\]
\end{proof}

\section*{AI disclosure}
All the ideas were derived from the authors, and all the results were obtained by them. They used ChatGPT 5.6 to assist the write-up of the manuscript, and take full responsibility for the manuscript.

\section*{Acknowledgments}
The authors thank Richard Cleve and Amin Shiraz Gilani for discussions.

AH is supported by JSPS KAKENHI grant No.~24H00071, 25K24674, 25K24465. FLG is supported by JSPS KAKENHI grant No.~24H00071, 25K24674, 25K24465, MEXT Q-LEAP grant No.~JPMXS0120319794, JST ASPIRE grant No.~JPMJAP2302 and JST CREST grant No.~JPMJCR24I4.

\bibliographystyle{alpha}
\bibliography{References}
\end{document}